\begin{document}

\pagestyle{headings}



\newtheorem{observation}{Observation}

\title{Worst-case optimal approximation algorithms for maximizing triplet
consistency within phylogenetic networks\thanks{This research has been partly funded
by the Dutch BSIK/BRICKS project, and by the EU Marie Curie Research Training Network 
ADONET, Contract No MRTN-CT-2003-504438.}}

\author{Jaroslaw Byrka\inst{1,2},
	Pawel Gawrychowski\inst{4},
	Katharina T. Huber\inst{3},
	Steven Kelk\inst{1},
}

\institute{
Centrum voor Wiskunde en Informatica,\\
Kruislaan 413, NL-1098 SJ Amsterdam, Netherlands\\
\email{\{J.Byrka,S.M.Kelk\}@cwi.nl}\\
\and
Eindhoven University of Technology,\\ 
P.O. Box 513, 5600 MB Eindhoven, The Netherlands\\
\and
School of Computing Sciences, 
University of East Anglia,\\
Norwich, NR4 7TJ, United Kingdom\\
\email{katharina.huber@cmp.uea.ac.uk}
\and
Institute of Computer Science,
University of Wroclaw,\\
ul. Joliot-Curie 15, 50-383 Wroclaw, Poland
}

\maketitle

\begin{abstract}
This article concerns the following question arising in computational
evolutionary biology. For a given subclass of phylogenetic networks, what
is the maximum value of
$0 \leq p \leq 1$ such that for every input set $T$ of rooted triplets, there
exists some network ${\mathcal N}(T)$ from the subclass such that at least $p|T|$ of the
triplets are consistent with ${\mathcal N}(T)$? Here we prove that the set containing
all triplets (the full triplet set)
in some sense defines $p$, and moreover that any network $\mathcal{N}$ achieving fraction $p'$ for the full triplet
set can be converted in polynomial time into an isomorphic network $\mathcal{N}'(T)$ achieving fraction $\geq p'$ for
an arbitrary triplet set $T$. We demonstrate the power of this result for the field of phylogenetics by giving worst-case optimal
algorithms for level-1 phylogenetic networks (a much-studied extension of phylogenetic trees),
improving considerably upon the $5/12$ fraction obtained recently by Jansson, Nguyen and
Sung in \cite{JS2}. For level-2 phylogenetic networks we show that $p \geq 0.61$. 
We note that all the results in this article also apply to weighted triplet sets.

\end{abstract}


\section{Introduction}

One of the most commonly encountered problems in computational evolutionary biology
is to plausibly infer the evolutionary history of a set of species, often
abstractly modelled as a tree, using obtained biological data.
Existing algorithms for directly constructing such a tree do not scale
well (in terms of running time) and this has given rise to \emph{supertree} methods:
first infer trees for small subsets of the species and then puzzle them
together into a bigger tree such that in some well-defined sense the information in
the subset trees is preserved \cite{B04}. In the fundamental case where the subsets in question each
contain exactly three species - subsets of two or fewer species cannot convey
information - we speak of \emph{rooted triplet} methods.\\
\\
In recent years improved understanding of the complex mechanisms driving evolution has
stimulated interest in reconstructing evolutionary \emph{networks} \cite{dolittle}\cite{kunin}\cite{martin}\cite{rivera}. Such 
structures are more general than trees and allow us to capture the phenomenon of reticulate
evolution i.e. non tree-like evolution caused by processes such as hybrid speciation and horizontal gene transfer. A natural abstraction of reticulate evolution,
used already in several papers, is to permit \emph{recombination vertices}, vertices with indegree greater than one.
Informally a \emph{level-$k$ phylogenetic network} is an evolutionary network in which each biconnected
component contains at most $k$ such recombination vertices. Phylogenetic trees form the base: they are
level-0 networks. The higher the level of a network, the more intricate the pattern
of reticulate evolution that it can accommodate. Note that phylogenetic networks can also be useful for visualising two
or more competing hypotheses about tree-like evolution.\\
\\
Various authors have already studied the problem of constructing phylogenetic trees
(and more generally networks) which are consistent with an input set of rooted triplets.
Aho et al \cite{aho} showed a simple polynomial-time algorithm which, given a set
of rooted triplets, finds a phylogenetic tree consistent with all the triplets, or
shows that no such tree exists. For the equivalent problem in level-1 and level-2 networks
the problem becomes NP-hard \cite{iersel}\cite{JS2}, although the problem becomes polynomial-time solveable if
the input triplets are \emph{dense} i.e. if there is at least one triplet in the input
for each subset of three species \cite{iersel}\cite{JS1}.\\
\\
Several authors have considered algorithmic strategies of use when the algorithms from \cite{aho} and
\cite{JS1} fail to find a tree or network. G\c{a}sieniec et al \cite{Gasieniec99} gave a polynomial-time
algorithm which always finds a tree consistent with at least 1/3 of the (weighted) input triplets,
and furthermore showed that 1/3 is best possible when all possible triplets on $n$ species (the \emph{full triplet set})
are given as input. On the negative side, \cite{bryant97}\cite{jansson01}\cite{wu04} showed that it is NP-hard to
find a tree consistent with a maximum number of input triplets.
In the context of level-1 networks, \cite{JS2} gave a polynomial-time algorithm which produces a level-1
network consistent with at least $5/12 \approx 
0.4166$ of the input triplets. They also described an upper-bound, which is a function of the number of distinct species 
$n$ in the input,
on the percentage of input triplets that can be consistent with a level-1 network. As in \cite{Gasieniec99} this upper bound 
is tight in the sense that it is 
the best possible for the full triplet set on $n$ species. They computed a value of $n$ for which their upper bound equals
approximately $0.4883$, showing that in general a fraction better than this is not possible. The apparent convergence
of this bound from above to $0.4880...$ begs the question, however, whether a 
fraction better than $5/12$ is possible for level-1 networks, and whether the full 
triplet set is in general always the worst-case scenario for such fractions.\\
\\
In this paper we answer these questions in the affirmative, and in fact we give a much stronger
result. In particular, we develop a
probabilistic argument that (as far as such fractions are concerned) the full triplet set is indeed always the worst possible case, 
irrespective of the type of network being studied (Proposition~\ref{prop:random}, Corollary~\ref{lem:randomcor}). Furthermore, by using
a generic, derandomized polynomial-time (re)labelling procedure we can convert a network $\mathcal{N}$ which achieves a 
fraction $p'$ for the full triplet set into an isomorphic network $\mathcal{N}'(T)$ that achieves a fraction $\geq p'$ for a given 
input triplet
set $T$ (Theorem~\ref{thm:genericderand}). In this way we can easily use the full triplet set to generate, for any network structure,
a lower bound on the fraction that can be achieved for arbitrary triplet sets within such a network
structure. The derandomization we give is fully general (with a highly optimized running time) and leads immediately to a simple extension
of the 1/3 result from \cite{Gasieniec99}. For level-1 networks we use the derandomization to give a 
polynomial-time algorithm which
achieves a fraction \emph{exactly equal} to the level-1 upper-bound identified in \cite{JS2},
and which is thus worst-case optimal for level-1 networks. 
We formally prove that this achieves a fraction of at least 
0.48 for all $n$. Finally, we demonstrate the flexibility of our technique by proving that for level-2 networks (see \cite{iersel}) we can, 
for any triplet set $T$, find in polynomial time a level-2 network consistent with at least a fraction 0.61 of the triplets in $T$ (Theorem 
\ref{thm:lev2}).\\
\\
We emphasize that in this article we are optimizing (and thus
defining worst-case optimality) with respect to $|T|$, the number of triplets in the input, not $Opt(T)$, the
size of the optimal solution for that specific $T$. The latter formulation we call the MAX variant of the problem.
The fact that $Opt(T)$ is always bounded above by $|T|$ implies that an algorithm that obtains a fraction $p'$ of the input $T$ is
trivially also a $p'$-approximation for the corresponding MAX problem. Better approximation
factors for the MAX problem might, however, be possible. We discuss this further in Section \ref{sec:conc}.\\
\\
The results in this article are given in terms of unweighted triplet sets. A natural extension,
especially in phylogenetics, is to attach a weight to each triplet $t \in T$ i.e. a value $w(t) \in \mathbb{Q}_{\geq 0}$
denoting the relative importance of (or confidence in) $t$. In this weighted version of the problem
fractions are defined relative to the total weight of $T$ (defined as the sum of the weights of all triplets in $T$), not
to $|T|$. It is easy to verify that all the results in this article also hold for the weighted version of the problem.

\section{Definitions}
\label{sec:def}

A \emph{phylogenetic network} (\emph{network} for short) $\mathcal{N}$ on species
set $X$ is defined as a pair $(N, \gamma)$ where $N$ is the \emph{network topology}
(\emph{topology} for short) and $\gamma$ is a \emph{labelling} of the topology. The topology
is a directed acyclic graph in which exactly one vertex has indegree 0 
and outdegree 2 (the root) and all other vertices
have either indegree 1 and outdegree 2 (\emph{split vertices}), indegree 2 and outdegree 1 (\emph{recombination vertices}) or
indegree 1 and outdegree 0 (\emph{leaves}). A labelling is a bijective mapping from
the leaf set of $N$ (denoted $L^N$) to $X$. Let $n = |X| = |L^N|$. \\
\\
A directed acyclic graph is \emph{connected} (also called ``weakly connected'')
if there is an undirected path between any two vertices and \emph{biconnected} if it contains no vertex whose removal
disconnects the graph. A \emph{biconnected component} of a network is a maximal biconnected subgraph.

\begin{definition}
A network is said to be a \emph{level-$k$ network} if each biconnected component contains at most $k \in \mathbb{N}$ recombination
vertices.
\end{definition}
We define \emph{phylogenetic trees} to be the class of level-0 networks.

\begin{figure}
\centering \vspace{-0.5cm} \includegraphics{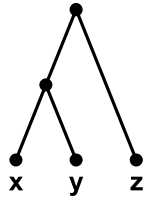} \caption{One of the three possible triplets on the set
of species $\{x, y, z\}$. Note that, as with all figures in this article, all arcs are assumed to be directed downwards,
away from the root.} \vspace{-0.3cm} \label{fig:singletriplet}
\end{figure}

The unique \emph{rooted triplet} (\emph{triplet} for short) on a species set $\{x,y,z\} \subseteq X$ in which the lowest
common ancestor of $x$ and $y$ is a proper descendant of the lowest common ancestor of $x$ and $z$
is denoted by $xy|z$ (which is identical to $yx|z$). For any set $T$ of triplets define $X(T)$ as the union of the
species sets of all triplets in $T$. 

\begin{definition}
\label{def:con} A triplet $xy|z$ is \emph{consistent} with a network $\mathcal{N}$ (interchangeably: $\mathcal{N}$ is consistent 
with $xy|z$) if $\mathcal{N}$ contains a subdivision of $xy|z$, i.e. if $\mathcal{N}$ contains vertices $u \neq v$ and pairwise 
internally vertex-disjoint paths $u \rightarrow x$, $u \rightarrow y$, $v \rightarrow u$ and $v \rightarrow z$\footnote{
Where it is clear from the context, as in this case, we may refer to a leaf by the species that it is mapped to.}.
\end{definition}
By extension, a set of triplets $T$ is consistent with a network $\mathcal{N}$ (interchangeably: $\mathcal{N}$ is consistent with 
$T$) iff, for all $t \in T$, $t$ is consistent with $\mathcal{N}$.
Checking consistency can be done in polynomial time, see Lemmas~\ref{lem:qubic_consistency} and~\ref{lem:fast_consistency}.

\begin{figure}
\centering \vspace{-0.5cm} \includegraphics{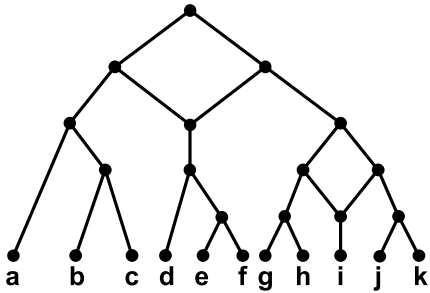} \caption{An example of a level-1 network.} \vspace{-0.3cm} \label{fig:bignetwork}
\end{figure}

\section{Labelling a network topology}
\label{sec:label}
Suppose we are given a topology $N$ with $n$ leaves, and a set $T$ of triplets where
$L^N = \{ l_1, l_2, \ldots, l_n\}$ and $X = X(T) = \{ x_1, x_2, \ldots, x_n\}$.
The specific goal of this section is to create a labelling $\gamma$
such that the number of triplets from $T$ consistent with $(N,\gamma)$, is
maximized.

Let $f(N,\gamma,T)$ denote the fraction of $T$ that is consistent with $(N, \gamma)$.

Consider the special set $T_1(n)$, the \emph{full triplet set}, of all the possible $3\binom{n}{3}$ triplets with leaves
labelled from $\{x_1, x_2, \ldots, x_n \}$. Observe that for this triplet set the number of triplets consistent with a phylogenetic
network $(N, \gamma)$ does not depend on the labelling $\gamma$. We may 
thus define $\#N = f(N, \gamma, T_1(n)) = f(N, T_1(n))$ by considering any arbitrary, fixed labelling $\gamma$. (Note that
$\#N = 1/3$ if $N$ is a tree topology.)

We will argue that the triplet set $T_1(n)$ is the worst-case input for maximizing
$f(N,\gamma,T)$ for any fixed topology $N$ on $n$ leaves. In particular we prove 
the following:

\begin{proposition}
\label{prop:random}
For any topology $N$ with $n$ leaves and any set of
triplets $T$, if the labelling $\gamma$
is chosen uniformly at random, then the quantity $f(N,\gamma,T)$ is a random variable
with expected value $E(f(N,\gamma,T))=\#N$.
\end{proposition}

\begin{proof}
Consider first the full triplet set $T_1(n) = \{t_1, t_2, \ldots, t_{3\binom{n}{3}}\}$ 
and an arbitrary fixed labelling $\gamma_0$.
By labelling $N$ we fix the \emph{position}
of each of the triplets in $N$. Formally, a position
of a triplet $t = xy|z$ (with respect to $\gamma_0$) is a triplet
$p = \gamma_0^{-1}(t) = \gamma_0^{-1}(x)\gamma_0^{-1}(y)|\gamma_0^{-1}(z)$ on the leaves of $N$.
We may list possible positions for a triplet in $N$
as those corresponding to $t_1, t_2, \ldots, t_{3\binom{n}{3}}$ in
$(N, \gamma_0)$. Since a $\#N$ fraction of $t_1, t_2, \ldots, t_{3\binom{n}{3}}$
is consistent with $(N, \gamma_0)$, a $\#N$ fraction of these positions
makes the triplet consistent. Now consider a single triplet $t \in T$ and a labelling $\gamma$
that is chosen randomly from the set $\Gamma$ of $n!$ possible bijections
from $L^N$ to $X$. Observe, that for each $t_i \in T_1(n)$, exactly
$2 \cdot (n-3)!$ labellings $\gamma \in \Gamma$ make triplet $t$ have
the same position in $(N, \gamma)$ as $t_i$ has in $(N, \gamma_0)$
(the factor of $2$ comes from the fact that we think of $xy|z$ and 
$yx|z$ as being the same triplet). Any single labelling occurs with 
probability $\frac{1}{n!}$, hence triplet $t$ takes any single position
with probability $\frac{2\cdot(n-3)!}{n!} = \frac{1}{3\cdot{n\choose3}}$.

Since for an arbitrary $t \in T$ each of the $3\cdot{n\choose3}$ positions
have the same probability and $\#N$ of them make $t$ consistent,
the probability of $t$ being consistent with $(N, \gamma)$ is $\#N$.
The expectation is thus that a fraction $\#N$ of the triplets in $T$
are consistent with $(N,\gamma)$.
\qed
\end{proof}

From the expected value of a random variable we may conclude the existence
of a realization that attains at least this value.

\begin{corollary}
\label{lem:randomcor}
For any topology $N$ and any set of triplets $T$ there exists a labelling $\gamma_0$
such that $f(N,\gamma_0,T) \geq \#N$.
\end{corollary}

We may deterministically find such a $\gamma_0$ by derandomizing 
the argument in a greedy fashion, using the method of conditional
expectation (see e.g. \cite{randomized}). In particular, we will show the following.
\begin{theorem}
\label{thm:genericderand}
For any topology $N$ and any triplet set $T$, a labelling $\gamma_0$ such
that $f(N,\gamma_0,T) \geq \#N$ can be found in time $O(m^3 + n|T|)$, where $m$ and $n$
are the numbers of vertices and leaves of $N$.
\end{theorem}

We use standard arguments from conditional expectation to prove that the
solution is of the desired quality. It is somewhat more sophisticated to
optimize the running time of this derandomization procedure. We therefore dedicate the following
subsection to the proof of Theorem~\ref{thm:genericderand}.





\subsection{An optimized derandomization procedure}

We start with a sketch of the derandomization procedure.
\begin{enumerate}
 \item $\Gamma \leftarrow$ set of all possible labellings.
 \item while there is a leaf $l$ whose label is not yet fixed, do:
	\begin{itemize}
	 \item for every species $x$ that is not yet used
		\begin{itemize}
		 \item let $\Gamma_x$ be the set of labellings from $\Gamma$ where $l$ is labeled by $x$
		 \item compute $E(f(N,\gamma_x,T))$, where $\gamma_x$ is chosen randomly
			from $\Gamma_x$
		\end{itemize}
	 \item $\Gamma \leftarrow \Gamma_x$ 
  s.t. $x = argmax_{x}{E(f(N,\gamma_x,T))}$
	\end{itemize}
 \item return $\gamma_0 \leftarrow$ the only element of $\Gamma$.
\end{enumerate}

The following lemma shows that the solution produced is of the desired quality.
\begin{lemma} \label{lem:quality}
 $f(N,\gamma_0,T) \geq \#N$.
\end{lemma}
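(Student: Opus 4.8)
The plan is to run the standard method-of-conditional-expectations argument: I will show that the quantity $E_{\gamma \in \Gamma}\bigl[f(N,\gamma,T)\bigr]$, taken over a uniformly random $\gamma$ drawn from the current surviving set $\Gamma$, never decreases as the algorithm fixes leaf labels one at a time. Since this quantity starts out at least $\#N$ and ends exactly equal to $f(N,\gamma_0,T)$, the lemma will follow. Formally I would proceed by induction on the number of leaves whose label has been fixed, maintaining the invariant $E_{\gamma \in \Gamma}\bigl[f(N,\gamma,T)\bigr] \geq \#N$.

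For the base case, when $\Gamma$ is the set of \emph{all} labellings and $\gamma$ is drawn uniformly, we have $E\bigl[f(N,\gamma,T)\bigr] \geq \#N$; this is exactly the randomized guarantee established earlier in the paper, which I would simply invoke. For the inductive step, consider an iteration with current surviving set $\Gamma$ and let $l$ be the leaf about to be fixed. Because every $\gamma \in \Gamma$ assigns exactly one as-yet-unused species to $l$, the sets $\Gamma_x$ partition $\Gamma$, i.e.\ $\Gamma = \bigsqcup_x \Gamma_x$ over the unused species $x$. The law of total expectation then gives
\[
E_{\gamma \in \Gamma}\bigl[f(N,\gamma,T)\bigr] = \sum_x \frac{|\Gamma_x|}{|\Gamma|}\, E_{\gamma \in \Gamma_x}\bigl[f(N,\gamma,T)\bigr],
\]
exhibiting the left-hand side as a convex combination of the conditional expectations. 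Since the maximum of a finite collection of numbers is at least their weighted average, the species $x$ selected by the algorithm (the $\mathrm{argmax}$) satisfies $E_{\gamma \in \Gamma_x}[f] \geq E_{\gamma \in \Gamma}[f]$, so after the update $\Gamma \leftarrow \Gamma_x$ the invariant is preserved and remains $\geq \#N$.

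To close the argument I would observe that at termination every leaf label is fixed, so $\Gamma$ contains a single labelling $\gamma_0$; the expectation then collapses to $E_{\gamma \in \Gamma}[f] = f(N,\gamma_0,T)$, and combining with the invariant yields $f(N,\gamma_0,T) \geq \#N$. The only step genuinely requiring care is the averaging identity above: one must confirm that the $\Gamma_x$ truly partition $\Gamma$ (so the weights $|\Gamma_x|/|\Gamma|$ are nonnegative and sum to $1$) and that the distribution underlying the computed quantity $E(f(N,\gamma_x,T))$ is precisely the uniform conditional distribution on $\Gamma_x$. Once these are pinned down the convexity inequality is immediate, and the remainder is routine bookkeeping over the induction.
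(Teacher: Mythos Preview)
Your proposal is correct and follows essentially the same conditional-expectations argument as the paper: invoke the randomized guarantee for the initial $\Gamma$, observe that the $\Gamma_x$ partition $\Gamma$ so the overall expectation is a convex combination of the conditional ones, and conclude that the argmax choice keeps the expectation nondecreasing until it collapses to $f(N,\gamma_0,T)$. The only cosmetic difference is that the paper records the initial expectation as exactly $\#N$ (via the cited proposition) rather than merely $\geq \#N$, which is immaterial for the conclusion.
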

\begin{proof}
By Proposition~\ref{prop:random} the initial random labelling $\gamma$
has the property $E(f(N,\gamma,T))=\#N$.
It remains to show that this expectation is not decreasing when labels of leaves 
get fixed during the algorithm. Consider a single update $\Gamma \leftarrow \Gamma_x$
of the range of the random labelling. By the choice of the leaf $l$ to get a fixed label we choose 
a partition of $\Gamma$ into blocks $\Gamma_x$.
The expectation $E(f(N,\gamma,T))$ is an average of $f(N,\gamma,T)$ over $\Gamma$,
and at least one of the blocks $\Gamma_x$ of the partition has this average at least as big as the total average.
Hence, by the choice of $\Gamma_x$ with the highest expectation of $f(N,\gamma_x,T)$, 
we get $E(f(N,\gamma_x,T)) \geq E(f(N,\gamma,T))$.
\qed
\end{proof}

We will now propose an efficient implementation of the derandomization procedure.
Since the procedure takes a topology $N$ as an input,
we need to express the running time of this procedure in terms of the size of $N$.
We will use $m = |V(N)|$ for the number of vertices of $N$, and $n$ for the number
of leaves.

We will need a generalized definition of consistency.
In Definition~\ref{def:con} it is assumed that $x,y$ and $z$ are leaves
of the network $\mathcal{N}$. In the following we will use a predicate $consistent(x,y,z)$
defined on vertices of $\mathcal{N}$ which coincides with Definition~\ref{def:con} when 
$x,y$ and $z$ are leaves. The predicate $consistent(x,y,z)$ is defined to be true
if $\mathcal{N}$ contains vertices $u \neq v$ and pairwise 
internally vertex-disjoint paths $u \rightarrow x$, $u \rightarrow y$, $v \rightarrow u$ and $v \rightarrow z$;
paths are allowed to be of length 0 (e.g. $u = x$), but vertices $x,y$ and $z$ need to be all different.
We begin with a consistency checking algorithm.

\begin{lemma}\label{lem:qubic_consistency}
Given a network $\mathcal{N}$ with $m$ vertices, we can preprocess it in time $O(m^3)$ so that given any three vertices $x,y,z$ we can check whether
$xy|z$ is consistent with $\mathcal{N}$ in time $O(1)$.
\end{lemma}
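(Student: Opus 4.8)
The plan is to observe first that there are only $\Theta(m^3)$ triples $(x,y,z)$, so it suffices to precompute a three-dimensional Boolean table $C[x][y][z]$ holding the value of $consistent(x,y,z)$; each query is then a single lookup and costs $O(1)$. The entire content of the lemma is thus to fill this table within a budget of $O(m^3)$, that is, in amortized constant time per entry. (Here I will use that a phylogenetic network has bounded degree, so its number of edges is $e = O(m)$.)

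Next I would unpack the predicate into two gadgets. The witnesses $u \neq v$ together with the four internally vertex-disjoint paths $v \to u$, $v \to z$, $u \to x$, $u \to y$ form a topological embedding of the rooted triple $((x,y),z)$: a ``spider'' whose top branch vertex is $v$ (one branch going down to $u$, one to $z$) and whose lower branch vertex is $u$ (branches to $x$ and $y$). I would treat separately the \emph{lower gadget} — two internally vertex-disjoint paths from $u$ to $x$ and to $y$ — and the \emph{upper gadget} — internally vertex-disjoint paths $v \to u$ and $v \to z$ — and then reattach them. For the lower gadget I would precompute, for every candidate $u$, the set of pairs $(x,y)$ reachable from $u$ by two internally vertex-disjoint paths. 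Viewing $\mathcal{N}$ as a single-source flow graph rooted at $u$ and splitting every vertex into an in/out pair of unit capacity, such a pair of paths exists iff $x$ and $y$ are both reachable from $u$ and no vertex other than $u$ dominates both of them; equivalently, the nearest common ancestor of $x$ and $y$ in the dominator tree of $u$ equals $u$. The dominator tree rooted at each $u$ is computable in near-linear time and, after standard NCA preprocessing, each pair is decided in $O(1)$, so this phase costs $O(m(m+e)) = O(m^3)$. The reachability matrix $R[a][b]$ is obtained in the same budget by one graph search per vertex.

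The upper gadget and, above all, the \emph{global} disjointness between the two gadgets are where the real difficulty lies. The obstacle is that the spider is not a single-commodity flow: the vertex $u$ has one unit entering (along $v \to u$) but two units leaving (toward $x$ and $y$), and $v$ emits into two branches, so a naive max-flow formulation does not capture the structure; moreover the two paths of the upper gadget must avoid \emph{all} internal vertices used by the lower gadget. I see two routes. The direct route encodes the whole spider as one vertex-capacitated integral flow — split every vertex into an in/out pair, give $u$ and $v$ enough capacity to realize the required branchings, add a super-source feeding $v$ and a super-sink drawing one unit from each of $x,y,z$ — and then proves that feasibility of this flow is \emph{equivalent} to the existence of the four pairwise internally disjoint paths (by integrality of the flow, and by arguing that the forced branch multiplicities rule out any ``merged'' routing that cheats). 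Running one such bounded-value flow (value $\le 3$, hence $O(1)$ augmentations, each $O(e)$) per ordered pair $(v,u)$ with $v$ reaching $u$ costs $O(m^2 e) = O(m^3)$, and one sets $C[x][y][z]$ true whenever some pair $(u,v)$ succeeds.

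The alternative, combinatorial route is to prove a \emph{decoupling lemma}: whenever the lower gadget exists for some $u$ and, independently, $v \neq u$ reaches both $u$ and $z$, the four paths can be rerouted by an augmenting/exchange argument in the DAG so that they become pairwise internally disjoint; this would let me combine the tabulated lower gadget with reachability directly, without any per-pair flow. I expect proving this equivalence (or decoupling) to be the main obstacle: the crux is ruling out the possibility that an overlap between the $z$-path and the $x,y$-paths is \emph{unavoidable} even when each gadget is separately routable. The remaining corner cases — length-$0$ paths, the constraint $u \neq v$, and the required distinctness of $x,y,z$ — are routine bookkeeping, but must be checked carefully so that the tabulated predicate agrees exactly with Definition~\ref{def:con}.
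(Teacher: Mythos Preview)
Your proposal has a genuine gap in the ``direct route'' via flow, and the ``alternative route'' is, by your own admission, an open problem; so neither branch actually delivers the $O(m^3)$ bound.

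On the flow route: the instance you describe is parameterised by \emph{all five} vertices $(x,y,z,u,v)$ --- the super-sink is built from $x,y,z$, and the special capacities at $u,v$ require fixing $u$ and $v$ as well. Hence ``one flow per ordered pair $(v,u)$'' is not enough; you need one flow per quintuple, which is $\Theta(m^5)$ instances at $\Theta(e)=\Theta(m)$ each, i.e.\ $\Theta(m^6)$. Even if you manage to let the flow discover $u,v$ internally (super-source feeding every vertex, say), you still run one flow per triple $(x,y,z)$, which is $\Theta(m^3)\cdot\Theta(m)=\Theta(m^4)$. A second, independent problem is the claimed equivalence: once you raise the capacity of $u$ so that two units may leave it, nothing in the flow prevents the $z$-unit from \emph{also} passing through $u$; integrality alone does not force the specific tree shape $((x,y),z)$, and ``forced branch multiplicities'' does not obviously rule this out. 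So both the complexity accounting and the correctness of the reduction are unproven.

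For contrast, the paper's proof avoids flows, dominator trees, and any decoupling lemma entirely. It orders the vertices topologically and fills the table by \emph{peeling} the largest of $x,y,z$: if $z$ is largest, then $consistent(x,y,z)$ holds iff $consistent(x,y,z')$ holds for some in-neighbour $z'\notin\{x,y\}$ of $z$; if $y$ (symmetrically $x$) is largest, one either steps to an in-neighbour $y'\notin\{x,z\}$, or reaches the boundary case $u=x$, which reduces to a two-path predicate $join(x,z)$ checkable in linear time. The total work is $\sum_{x,y,z}\bigl(\deg^-(x)+\deg^-(y)+\deg^-(z)\bigr)=O(m^2 e)=O(m^3)$. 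This is both simpler and tight; the dominator/flow machinery you set up is not needed.
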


\begin{proof}
Assume that vertices of $\mathcal{N}$ are numbered according to a topological order so that whenever $(u,v)$ is an edge, $u<v$.
Observe that $consistent(x,y,z)$ can be expressed in terms of $consistent(x',y',z')$ where $\max(x',y',z')<\max(x,y,z)$
as follows:

\begin{enumerate}
\item $x,y<z$. Then $consistent(x,y,z)$ holds iff for some $z'\neq x,y$ $(z',z)$ is an edge and $consistent(x,y,z')$ holds.
\item $x,z<y$. Then $consistent(x,y,z)$ holds iff $(x,y)$ is an edge and $join(x,z)$ holds or for some $y'\neq x,z$ $(y',y)$ is an edge and
$consistent(x,y',z)$ holds.
\item $y,z<x$. Then $consistent(x,y,z)$ holds iff $(y,x)$ is an edge and $join(y,z)$ holds or for some $x'\neq y,z$ $(x',x)$ is an edge and
$consistent(x',y,z)$ holds.
\end{enumerate}

where $join(x,z)$ is a predicate stating that $\mathcal{N}$ contains $t\neq x$ and internally vertex-disjoint paths $t\rightarrow x$ and
$t\rightarrow z$. It can be verified in linear time, for example by 
checking if there is a path $r\rightarrow z$ (where $r$ is the root) after 
removing $x$ from the network.

So, determining all consistent triplets can be done by calculating all $consistent(x,y,z)$. The number of operations we have to perform
for each $x,y,z$ is not greater than sum of indegrees of those vertices which makes the overall complexity $O(m^3+m^2|E(\mathcal{N})|)=O(m^3)$.
\qed
\end{proof}

Although we do not use the result in this article, we note that the above lemma can be strengthened to obtain the following, which
is proved in the appendix.

\begin{lemma} \label{lem:fast_consistency}
Given a level-$k$ network $\mathcal{N}$, we can preprocess it in time $O(m+mk^2)$ so that given any three vertices $x,y,z$ we can check whether
$xy|z$ is consistent with $\mathcal{N}$ in time $O(1)$.
\end{lemma}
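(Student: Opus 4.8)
The plan is to replace the exhaustive cubic dynamic program of Lemma~\ref{lem:qubic_consistency} by a query-driven scheme that answers each triple in $O(1)$ after structural preprocessing, exploiting the hypothesis that in a level-$k$ network every biconnected component contains at most $k$ reticulations. First I would compute, in linear time $O(m)$, a topological order, the biconnected components (blocks) of $\mathcal{N}$, the associated block-cut tree, and standard $O(1)$-query LCA data for the tree skeleton obtained by contracting each block to a single node. The point of this skeleton is that two internally vertex-disjoint directed paths leaving a common vertex must branch apart inside a single block: along a bridge (a one-edge block) every downward path is forced through that edge, so the only place where paths heading to distinct targets can diverge while staying vertex-disjoint is within one biconnected component.

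Next I would reduce a query $xy|z$ to a bounded amount of per-block work. Using the block-cut tree I locate the lowest block in which the $Y$-shaped witness required by $consistent(x,y,z)$ — the stem $v\rightarrow u$, the fork $u\rightarrow x$, $u\rightarrow y$, and the side branch $v\rightarrow z$ — can actually branch; outside this block the routing of all four paths is forced, and internal vertex-disjointness there reduces to comparisons of skeleton LCAs, answerable in $O(1)$. The genuinely network-like part is then whether, inside the relevant block(s), internally vertex-disjoint paths realizing exactly this grouping exist, and this is what the precomputed tables must answer.

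For the intra-block tables I would exploit the level-$k$ bound through Menger-type reasoning: disjoint-path obstructions inside a block are governed by its reticulations, of which there are at most $k$, so every configuration of three internally vertex-disjoint paths can be rerouted into a canonical one whose branch vertices sit at reticulations. Hence for each vertex of a block it suffices to record, for each ordered pair of reticulations of that block, whether the two secondary targets are reachable by disjoint paths compatible with that pair, i.e.\ $O(k^2)$ entries per vertex. Since $\sum_{B}|V(B)| = O(m)$, where $B$ ranges over the blocks of the block-cut tree, the total cost is $O(m + mk^2)$, and each query reads only $O(1)$ of these entries after the $O(1)$ skeleton-LCA reduction.

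The main obstacle is the localization and canonicalization argument underlying the second and third steps: one must prove rigorously that the global, Menger-type three-disjoint-paths condition defining $consistent(x,y,z)$ decomposes along the block-cut tree, and that inside a single block the three paths can always be rerouted so their branch vertices coincide with reticulations, making the reticulation-pair table both sound and complete. Care is also needed with the degenerate cases permitted by the generalized definition — paths of length $0$, and the requirement that $x,y,z$ be distinct while $u$ and $v$ may coincide with path endpoints — which must be absorbed into the base cases of the per-block tables without breaking the $O(1)$ query bound.
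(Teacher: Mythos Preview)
The paper defers its own proof of this lemma to an appendix that is not reproduced in the excerpt provided, so a line-by-line comparison is not possible; what follows assesses your outline on its own terms.

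The block--cut-tree localization together with per-block tables whose size is governed by the at-most-$k$ reticulations per block is the natural architecture, and the stated $O(m+mk^2)$ budget strongly suggests something of this shape. Two of your steps, however, are not yet arguments, and one looks misstated. Your canonicalization claim --- that the four-path witness can be rerouted so that ``its branch vertices sit at reticulations'' --- conflates split vertices with reticulations: the vertices $u$ and $v$ in the definition are points where directed paths \emph{diverge} (so, absent a zero-length path, they must have outdegree at least $2$), whereas reticulations are vertices of indegree at least $2$ and, in the usual model, outdegree $1$. What the level-$k$ hypothesis actually controls is where distinct directed paths can \emph{merge}, i.e., where internal vertex-disjointness can fail; a correct structural lemma would say that routing inside a block is determined by which of the $\leq k$ reticulations each path passes through, not that $u,v$ are themselves reticulations. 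More seriously, the table you describe does not deliver $O(1)$ queries. Your entries are indexed by a vertex together with an ordered reticulation pair, but a query supplies three arbitrary vertices $x,y,z$ and no reticulation pair; since the relevant pair is not known in advance, a direct lookup must range over all $\Theta(k^2)$ pairs (and in fact over combinations of pairs for the three targets), giving at best $O(k^2)$ query time rather than $O(1)$. To reach $O(1)$ you need a further summarization --- for instance, precomputed answers indexed by the bounded set of cut-vertex ports through which $x,y,z$ enter the critical block, together with a proof that those ports are themselves computable in $O(1)$ from the skeleton LCA data --- and the proposal does not supply it. You rightly flag the localization step and the degenerate cases as the main obstacles; as written they are acknowledged rather than resolved.
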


Armed with Lemma \ref{lem:qubic_consistency} (or Lemma \ref{lem:fast_consistency}) we are now ready to
proceed with the derandomization. Assume that we assign labels to leaves in order of
their position in the topological ordering of Lemma \ref{lem:qubic_consistency}; for simplicity let us refer
to the leaves as $1,2,\ldots,n$. The most time-consuming part of the algorithm is 
calculating the 
probability that a given triplet $t=ab|c$
is consistent with a random labelling having already specified the labels of leaves $1,2,\ldots,k-1$. Let $\gamma$ be this partial labelling.
We need to consider six cases:

\begin{enumerate}
\item Labels $a,b,c$ have not been assigned yet. The probability is then simply the number of consistent triplets $xy|z$ with $k\leq x,y,z$
divided by $3\binom{n-k+1}{3}$.

\item $\gamma(x)=a$ but both $b$ and $c$ have not been assigned yet. The probability is the number of $k\leq y,z$ such that $xy|z$ is
a consistent triplet divided by $2\binom{n-k+1}{2}$.

\item $\gamma(z)=c$ but both $a$ and $b$ have not been assigned yet. The probability is the number of $k\leq x<y$ such that $xy|z$ is a
consistent triplet divided by $\binom{n-k+1}{2}$.

\item $\gamma(x)=a,\gamma(y)=b$ but $c$ has not been assigned yet. The probability is the number of $k\leq z$ such that $xy|z$ is a consistent
triplet divided by $n-k+1$.

\item $\gamma(x)=a,\gamma(z)=c$ but $b$ has not been assigned yet. The probability is the number of $k\leq y$ such that $xy|z$ is a consistent
triplet divided by $n-k+1$.

\item $\gamma(x)=a,\gamma(y)=b,\gamma(z)=c$. The probability is $0$ or $1$, depending on whether $xy|z$ is consistent or not.

\end{enumerate}

In the first five cases, we can do rather better than simply counting all the $xy|z$ each time from scratch. Define:
\begin{eqnarray*}
count3(k)&:=&\#\text{(x,y,z) such that } k\leq x,y,z\text{ and } x<y\text{ and } consistent(x,y,z)\\
count2x(k,x)&:=&\#\text{(y,z) such that } k\leq y,z\text{ and } consistent(x,y,z)\\
count2z(k,z)&:=&\#\text{(x,y) such that } k\leq x<y\text{ and } consistent(x,y,z)\\
count1xy(k,x,y)&:=&\#\text{z such that } k\leq z\text{ and } consistent(x,y,z)\\
count1xz(k,x,z)&:=&\#\text{y such that } k\leq y\text{ and } consistent(x,y,z)
\end{eqnarray*}
It is easy to see that we can compute all the above values in time $O(m^3)$ as follows:
\begin{eqnarray*}
count3(k)&=&count3(k+1) + count2x(k+1,k) + count2z(k+1,k)\\
count2x(k,x)&=&count2x(k+1,x) + count1xz(k+1,x,k) + count1xy(k+1,x,k)\\
count2z(k,z)&=&count2z(k+1,z) + count1xz(k+1,k,z)\\
count1xy(k,x,y)&=&count1xy(k+1,x,y) + consistent(x,y,k)\\
count1xz(k,x,z)&=&count1xz(k+1,x,z) + consistent(x,k,z)
\end{eqnarray*}

Having preprocessed the above values, we can calculate each probability in time $O(1)$, giving us a total running time
of $O(m^3+n^2|T|)$. It turns out, however, that we can do slightly better. As it currently stands, for each leaf and each
unused label we calculate the expected number of consistent triplets after assigning this label to this leaf separately. To further
improve the complexity of the running time we can try to do all such calculations at once (for a fixed leaf): for each triplet and for
each unused label we calculate the probability that this triplet is consistent after we use this label. (In other words, we
switch from a leaf-label-triplet loop nesting to leaf-triplet-label). Then it turns out that those probabilities are the 
same for almost all unused labels. More formally:

\begin{lemma} \label{lem:label_choice}
Let $\Gamma$ be the set of labellings that assign $\gamma(i)$ to leaf $i$ for each $i=1,2,\ldots,k-1$ and $\Gamma_x$ be the subset of
$\Gamma$ that assign $x$ to leaf $k$. We can find $x$ for which $E(f(N,\gamma_x,T))$ is maximum in time $O(|T|)$ assuming the
above preprocessing.
\end{lemma}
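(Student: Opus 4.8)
The plan is to exploit the observation, already hinted at above, that for a fixed triplet $t=ab|c$ the probability that $t$ is consistent after setting $\gamma(k)=x$ depends on the unused label $x$ only through whether $x\in\{a,b,c\}$. First I would establish this invariance precisely: conditioning on leaf $k$ receiving a label outside $\{a,b,c\}$ leaves the remaining unused labels exchangeable over the free leaves $k+1,\dots,n$, so the resulting probability---call it $P_{\mathrm{def}}(t)$---is the same for every such $x$. Writing the expectation as a sum over triplets, $E(f(N,\gamma_x,T))=\sum_{t\in T}\Pr[t\text{ consistent}\mid \gamma(k)=x]$, I would then split each summand as $P_{\mathrm{def}}(t)$ plus a correction that vanishes unless $x$ is one of the (at most three) labels of $t$. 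This yields the decomposition $E(f(N,\gamma_x,T))=C+\Delta(x)$, where $C=\sum_{t}P_{\mathrm{def}}(t)$ is independent of $x$ and $\Delta(x)=\sum_{t:\,x\in\{a,b,c\}}\bigl(\Pr[t\text{ consistent}\mid\gamma(k)=x]-P_{\mathrm{def}}(t)\bigr)$ collects only the triplets that actually mention $x$.

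Next I would describe a single pass that computes $C$ together with all nonzero values $\Delta(x)$. For each triplet $t=ab|c$ I determine in $O(1)$ which of $a,b,c$ are already fixed by $\gamma$ and evaluate $P_{\mathrm{def}}(t)$ from the precomputed quantities $count3,count2x,count2z,count1xy,count1xz$ with the threshold advanced to $k+1$ (leaf $k$ now being fixed); this is exactly one of the six cases listed above and costs $O(1)$. I add $P_{\mathrm{def}}(t)$ to a running total $C$. Then, for each label $w\in\{a,b,c\}$ that is still unused, I re-evaluate the probability under the hypothesis $\gamma(k)=w$---which pins $w$ to leaf $k$ and hence moves $t$ into a different one of the six cases, again $O(1)$---and add the difference to an accumulator $\Delta[w]$, recording $w$ in a list of touched labels. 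Since every triplet touches at most three labels and performs $O(1)$ arithmetic, the whole pass runs in $O(|T|)$.

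Finally, to report the optimal label I maximize $C+\Delta(x)$ over the unused labels. Every unused label not touched by any triplet attains the baseline value $C$, so it suffices to compare $C$ (whenever such an untouched unused label exists) against $C+\Delta[w]$ for the $O(|T|)$ touched labels $w$. Scanning the touched list, rather than all $n$ labels, is what keeps this step within $O(|T|)$; a boolean marker per label lets me both deduplicate the list and test in $O(1)$ whether an untouched unused label remains. The argmax over these candidates is the desired $x$.

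I expect the main obstacle to be the first step: arguing rigorously that $\Pr[t\text{ consistent}\mid\gamma(k)=x]$ is genuinely constant across all unused $x\notin\{a,b,c\}$, and that the two probabilities entering $\Delta(x)$---the default value and the value with $w$ pinned to leaf $k$---are each obtainable in $O(1)$ from the preprocessed counts through the six-case analysis. The remaining care is purely combinatorial bookkeeping, using the touched-label list to avoid an $O(n)$ scan, so that the final maximization, and hence the whole procedure for a single leaf, stays within $O(|T|)$.
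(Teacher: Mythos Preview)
Your proposal is correct and follows essentially the same approach as the paper: both exploit that each triplet's contribution is the same for all unused $x\notin\{a,b,c\}$, so only $O(1)$ updates per triplet are needed---the paper phrases this as \emph{subtracting} from the at-most-three special $e_l$ rather than adding to the rest, which is exactly your $C+\Delta(x)$ decomposition in different clothing. Your touched-label list for the final argmax is a small refinement the paper omits (its proof implicitly ends with an $O(n)$ scan over unused labels, which is harmless for the overall $O(m^3+n|T|)$ bound but strictly speaking exceeds the $O(|T|)$ claimed in the lemma).
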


\begin{proof}
Let $e_x=E(f(N,\gamma_x,T))$. Each such $e_x$ is a sum of probabilities corresponding to the elements of $T$. We will start with
all $e_x$ equal $0$ and consider triplets one by one. For each triplet $t$ we will increase the appropriate values of $e_x$ by the corresponding
probabilities. Again we should consider six cases depending on how $t=ab|c$ looks like. Let $A=\frac{1}{3{n-k+1 \choose 3}}$,
$B=\frac{1}{{n-k+1 \choose 2}}$ and $C=\frac{1}{n-k+1}$:

\begin{enumerate}

\item Labels $a,b,c$ have not been assigned yet. We should increase $e_a$ and $e_b$ by $count2x(k+1,k)\frac{B}{2}$, $e_c$ by $count2z(k+1,k)B$ and
the remaining $e_l$ by $count3(k+1)A$.

\item $\gamma(x)=a$ but both $b$ and $c$ have not been assigned yet. We should increase $e_b$ by $count1xy(k+1,a,k)C$, $e_c$ by $count1xz(k+1,x,k)C$
and the remaining $e_l$ by $count2x(k+1,x)\frac{B}{2}$.

\item $\gamma(z)=c$  but both $a$ and $b$ have not been assigned yet. We should increase $e_a$ and $e_b$ by $count1xz(k+1,k,z)C$ and the
remaining $e_l$ by $count2z(k,z)B$.

\item $\gamma(x)=a, \gamma(y)=b$ but $c$ has not been assigned yet. We should increase $e_c$ by $consistent(x,y,k)$ and the remaining
$e_l$ by $count1xy(k+1,x,y)C$.

\item $\gamma(x)=a, \gamma(z)=c$ but $b$ has not been assigned yet. We should increase $e_b$ by $consistent(x,k,z)$ and the remaining
$e_l$ by $count1xz(k+1,x,z)C$.

\item $\gamma(x)=a, \gamma(y)=b, \gamma(z)=c$. We should increase all $e_l$ by $consistent(x,y,z)$.

\end{enumerate}

The na\"{i}ve implementation of the above procedure would require time $O(|T|n)$. We can improve it by observing that
we are interested only in the relative increment of the different $e_x$, not in their actual values. So, instead of 
increasing all $e_l$ with $l\notin S$ by some $\delta$ (for some $S \subseteq X$), we can decrease all $e_l$ with $l\in S$ by 
this $\delta$. Then processing each triplet takes time $O(1)$ as we only have to change at most $3$ values of $e_x$.

\qed
\end{proof}

We may now compose these lemmas into the following.

\begin{proof}[of Theorem~\ref{thm:genericderand}]
Consider the procedure sketched at the beginning of the subsection and 
implemented according to the above lemmas. By Lemma~\ref{lem:quality} it produces good labellings.
By Lemma~\ref{lem:qubic_consistency} and $n$ applications of Lemma~\ref{lem:label_choice},
it can be implemented to run in $O(m^3+n|T|)$ time.
\qed
\end{proof}

\subsection{Consequences of Theorem~\ref{thm:genericderand}}
\label{subsec:discuss}

The above theorem gives a new perspective on the problem of approximately constructing phylogenetic networks.
From the algorithm of G\c{a}sieniec et al.~\cite{Gasieniec99} 
we can always construct a phylogenetic tree that is consistent with at least $1/3$ of the the input triplets.
In fact, the trees constructed by 
this algorithm are very specific - they are always caterpillars. (A caterpillar
is a phylogenetic tree such that, after removal of leaves, only a directed path remains.)
Theorem~\ref{thm:genericderand} implies that not only caterpillars, but all possible tree topologies
have the property, that given any set of triplets we may 
find in polynomial time a proper assignment of species into leaves with the
guarantee that the resulting phylogenetic tree is consistent with at least a third of the input triplets.\\
\\ The generality of 
Theorem~\ref{thm:genericderand} makes it meaningful not only for trees, but also for any 
other subclass of phylogenetic 
networks (e.g. for level-$k$ networks). Let us assume that we have focussed our attention on a certain subclass of networks. Consider the 
task of designing an algorithm 
that for a given triplet set constructs a network from the subclass consistent with at
least a certain fraction of the given triplets. A worst-case approach as described in 
this section will never give us a guarantee better than the maximum value 
of $\#N$ ranging over all topologies $N$ in the subclass. Therefore, if we 
intend to obtain networks 
consistent with a big fraction of triplets and if our criteria is to 
maximize this fraction in the worst case, then our task reduces to finding 
topologies within 
the subclass that are good for the full triplet set.
Theorem~\ref{thm:genericderand} potentially has a further use as a mechanism for comparing the quality of 
phylogenetic networks generated by other methods, because it provides lower bounds for the fraction of $T$ that a
given topology and/or subclass of topologies
can be consistent with. (Although a fundamental problem in phylogenetics
\cite{framework2004}
\cite{tripartite2007}
\cite{intraspec2005}
\cite{nakleh2003}
\cite{plants2004}
the science of network comparison is still very much in its infancy.)\\
\\
For level-$0$ networks (i.e. phylogenetic trees) the problem of finding optimal
topologies for the full triplet set is simple: any tree is consistent with exactly $1/3$ of the full triplet 
set. For level-$1$ phylogenetic networks a topology that is optimal
for the full triplet set was constructed in~\cite{JS2}.
We may use this network and Theorem~\ref{thm:genericderand}
to obtain an algorithm that works for any triplet set and creates a network that is
consistent with the biggest possible fraction of triplets in the worst case (see Section~\ref{sec:lev1} for
more details). For level-$2$ networks we do not yet know the optimal structure of a topology for the full triplet
set, but we will show in Section~\ref{sec:lev2} that we can construct a network
that has a guarantee of being consistent with at least a fraction $0.61$ of the input triplets.

\section{Application to level-1 phylogenetic networks}
\label{sec:lev1}

\subsection{A worst-case optimal polynomial-time algorithm for level-1 networks}

In \cite{JS2} it was shown how to construct a special level-1 topology $C(n)$,
which we call a \emph{galled caterpillar}\footnote{In \cite{JS2} this is called a
\emph{caterpillar network}.}, such that $\#C(n) \geq \#N$ for
all level-1 topologies $N$ on $n$ leaves. The existence of $C(n)$, which
has a highly regular structure, was proven by showing that any other topology $N$ can be transformed into $C(n)$
by local rearrangements that never decrease the number of triplets the associated network is consistent with. It was shown that
$\#C(n) = S(n)/3\binom{n}{3}$, where $S(0)=S(1)=S(2)=0$ and, for $n > 2$,

\begin{equation}
\label{eq:sn}
S(n) = max_{1 \leq a \leq n} \bigg \{ \binom{a}{3} + 2 \binom{a}{2} (n-a) + a \binom{n-a}{2} + S(n-a) \bigg \}.
\end{equation}

\begin{figure}
\centering \vspace{-0.5cm} \includegraphics[scale=0.4]{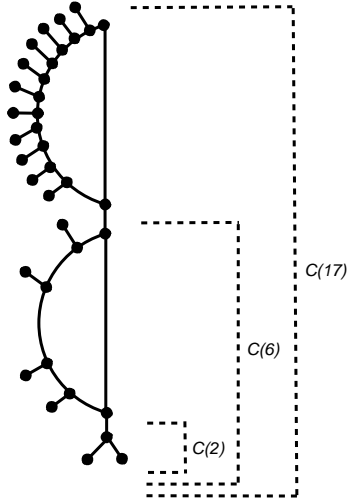} 
\caption{This is galled caterpillar $C(17)$. It contains two galls and 
ends with a tail of two leaves. $C(17)$ contains 11 leaves in the top gall
because Equation \ref{eq:sn} is maximised for $a=11$.}  \vspace{-0.3cm} \label{fig:cats}
\end{figure}

In Figure \ref{fig:cats} an example of a galled caterpillar is shown. All galled caterpillars on $n \geq 3$ leaves
consist of one or more \emph{galls} chained together in linear fashion and terminating in a tail of one or two leaves.
Observe that the recursive structure of $C(n)$ mirrors directly the recursive definition of $S(n)$ in the
sense that the value of $a$ chosen at recursion level $k$ 
is equal to the number of leaves found in the $k$th gall, counting downwards from the root. In the
definition of $C(n)$ it is not specified how the
$a$ leaves at a given recursion level are distributed within the gall, but it is easy to verify that placing
them all on one side of the gall (as shown in the figure) is sufficient.

\begin{lemma}
\label{lem:lev1}
Let $T$ be a set of input triplets labelled by $n$ species. Then in time $O(n^3 + n|T|)$ it is possible to construct
a level-1 network $\mathcal{N}$, isomorphic to the galled caterpillar $C(n)$, consistent with at least a fraction
$S(n)/3\binom{n}{3}$ of $T$. 
\end{lemma}
\begin{proof}
First we construct the level-1 topology $C(n)$. Using dynamic programming to compute all values of $S(n')$ for $0 \leq n' \leq n$
we can do this in time $O(n^2)$. Note that $C(n)$ contains in total $O(n)$ vertices. It remains only to choose an
appropriate labelling of the leaves of $C(n)$, and this is achieved by substituting $C(n)$ for $N$ in Theorem 
\ref{thm:genericderand}; this dominates the running time. \qed
\end{proof}

Note that, because $C(n)$ achieves the best possible fraction for the input $T_1(n)$, the fraction achieved
by Lemma \ref{lem:lev1} is worst-case optimal for all $n$. Empirical experiments suggest that the function $S(n)/3\binom{n}{3}$
is strictly decreasing and approaches a horizontal asymptope of $0.4880...$ from above; for values of $n=10^1, 10^2, 10^3, 10^4$
the respective ratios are $0.511..., 0.490..., 0.4882..., 0.4880...$. It is difficult to formally prove convergence
to 0.4880... so we prove a slightly weaker lower bound of 0.48 on this function. From this it follows that in all cases the 
algorithm described in Lemma \ref{lem:lev1} is guaranteed to produce a network consistent with at least a fraction $0.48$ of $T$, improving 
considerably on the  $5/12 \approx 0.4166$ fraction achieved in \cite{JS2}.

\begin{lemma}
\label{lem:asymp}
$S(n)/3\binom{n}{3} > 0.48$ for all $n \geq 0$.
\end{lemma}
\begin{proof}
This can easily be computationally verified for $n < 116$, we have done this with a computer program
written in Java \cite{myurl}. To prove it for $n \geq 116$, assume by 
induction that
the claim is true for all $n' < n$. Instead of choosing the value of $a$ that maximises $S(n)$ we
claim that setting $a$ equal to $z=\lfloor 2n/3 \rfloor$ is sufficient for our purposes. We thus need to prove
the following inequality:
\[
\frac{ \binom{ z }{3} + 2\binom{z}{2}(n-z) + z\binom{n-z}{2} + S(n-z) }{ 3\binom{n}{3} } > 48/100.
\]
Combined with the fact that, by induction, $S(n-z)/3\binom{n-z}{3} > 48/100$, it is sufficient to prove that:
\[
\frac{ \binom{ z }{3} + 2\binom{z}{2}(n-z) + z\binom{n-z}{2} + 144/100\binom{n-z}{3} }{ 3\binom{n}{3} } > 48/100
\]
Using Mathematica we rearrange the previous inequality to:
\[
\frac{ \lfloor 2n/3 \rfloor \bigg ( 22 + 9n + 33n^2 - 6(7+18n)\lfloor 2n/3 \rfloor + 86 \lfloor 2n/3 \rfloor^2 \bigg ) }{n(2-3n+n^2)} < 0
\]
Taking $(2n/3)-1$ as a lower bound on $\lfloor 2n/3 \rfloor$, and $2n/3$ as an upper bound, it can be easily verified that
the above inequality is satisfied for $n \geq 116$.
\qed
\end{proof}

\noindent
To conclude this section we combine Lemmas \ref{lem:lev1} and \ref{lem:asymp} into the following Theorem.

\begin{theorem}
\label{thm:lev1summary}
Let $T$ be a set of input triplets labelled by $n$ species.
In time $O(n^3 + n|T|)$ it is possible to construct a level-1 network $\mathcal{N}$ consistent with at least
a fraction $S(n)/3\binom{n}{3} > 0.48$ of $T$, and this is worst-case optimal.
\end{theorem}

\section{A lower bound for level-2 networks}
\label{sec:lev2}

\begin{theorem}
\label{thm:lev2}
Let $T$ be a set of input triplets labelled by $n$ species.
It is possible to find in polynomial time a level-2 network $\mathcal{N}(T)$ such
that $\mathcal{N}(T)$ is consistent with at least a fraction 0.61 of $T$.
\end{theorem}

\begin{proof}
We prove this by induction.
For $n < 16813$ we use a computational proof. For $n \geq 16813$ we use Mathematica
to show that, assuming the induction base, a fraction 0.61 can be achieved by repeatedly chaining together
a very basic type of level-2 network into some kind of ``level-2 caterpillar''; the
details are deferred to the appendix. \qed
\end{proof}


%

\section{The complexity of optimisation}
\label{sec:complexity}

Given a topology $N$ and a set of triplets $T$, the techniques described in this article guarantee to find a 
labelling 
$\gamma$ such that $f(N,\gamma,T) \geq \#N$. It is natural to explore the complexity of finding, in polynomial time, (approximations 
to) an optimal labelling of $N$ for a particular triplet set $T$. The observation below rules out (under standard 
complexity-theoretic assumptions) the existence of a \emph{Polynomial-Time Approximation Scheme} (PTAS) for this problem. Secondly we 
observe that a PTAS for the problem MAX-LEVEL-0 (which carries the name MCTT in \cite{wu04}) can also be ruled out. We discuss the 
consequences of this in the next section.\\
\\
\textbf{Problem: } MAX-LEVEL-0-LABELLING\\
\textbf{Input: } A level-0 topology $N$ (i.e. a topology of a phylogenetic tree) and a set $T$ of rooted triplets.\\
\textbf{Output: } The maximum value of $s$ such that there exists a labelling $\gamma$ of $N$ making
$(N, \gamma)$ consistent with at least $s$ triplets from $T$.\\
\\
\textbf{Problem: } MAX-LEVEL-0\\
\textbf{Input: } A set $T$ of rooted triplets.\\
\textbf{Output: } The maximum value of $s$ such that there exists a level-0 network
$\mathcal{N}$ (i.e. a phylogenetic tree) consistent with at least $s$ triplets from $T$.

\begin{observation}
\label{obs:apxhard}
MAX-LEVEL-0 and MAX-LEVEL-0-LABELLING are both APX-complete.
\end{observation}
\begin{proof}
By approximation-preserving reduction from MAXIMUM SUBDAG / LINEAR ORDERING. See appendix. \qed
\end{proof}

\section{Conclusions and open questions}
\label{sec:conc}

With Theorem \ref{thm:genericderand} we have described a method which shows how, in polynomial time, good solutions
for the full triplet set can be efficiently converted into equally good, or better, solutions for more general
triplet sets. Where best-possible solutions are known for the full triplet set, this leads
to worst-case optimal algorithms, as demonstrated by Theorem \ref{thm:lev1summary}. An
obvious next step is to use this method to generate
algorithms (where possible worst-case optimal) for wider subclasses of phylogenetic
networks. Finding the/an ``optimal form'' of level-2 networks for the full triplet set remains
a fascinating open problem.

From a biological perspective (and from the
perspective of understanding the relevance of triplet methods) it is also important
to attach \emph{meaning} to the networks that the techniques described in this paper produce. For example, we have
shown how, for level-1 networks, we can always find a network isomorphic to a galled caterpillar which is consistent with
at least a fraction 0.48 of the input. If we do this, does the location of the species within this galled caterpillar
communicate any biological information? Also, what does it say about the relevance of triplet methods, and especially
the level-$k$ hierarchy, if we know \emph{a priori} that a large fraction (already for level 2 more than 0.61) 
of the input can be made consistent with some network from the subclass? And, as discussed in Section \ref{subsec:discuss},
how far can the techniques described in this paper be used as a quality measure for networks produced by other
algorithms?

As mentioned in the introduction, an algorithm guaranteed to find a network consistent with a fraction $p'$ of the input trivially 
becomes a $p'$-approximation for the MAX variant of the problem (where we optimise not with respect to $|T|$ but with respect to the 
size of the optimal solution for $T$.) In fact, the best-known approximation factor for MAX-LEVEL-0 is 1/3, a trivial extension of 
the fact that $p=1/3$ for trees \cite{Gasieniec99}. On the other hand, the APX-hardness of this problem implies that an 
approximation factor arbitrarily close to 1 will not be possible. It remains a highly challenging open problem to determine 
whether better approximation factors can be obtained for the latter problem via some different approach. (We note that an 
approximation
factor $f > 1/2$ would be a major breakthrough because, by the proof of Observation \ref{obs:apxhard}, this would improve upon
the long-standing best-known approximation factor of $1/2$ for the problem LINEAR ORDERING \cite{fences}.) Alternatively, there could be some complexity theoretic 
reason why approximation factors better than $p$ (where $p$ is optimal in our formulation) are  not possible. Under strong complexity-theoretic assumptions the best 
approximation factor possible for MAX-3-SAT, for example, uses 
a trivial upper bound of all the clauses in the input \cite{hastad}, analogous perhaps to using $|T|$ as an upper bound.

\section{Acknowledgements}

We thank Leo van Iersel and Jesper Jansson for very helpful comments.

\clearpage

\appendix

\section{Appendix}

In this section we give the full proofs for Lemma \ref{lem:fast_consistency} from Section \ref{sec:label},
Theorem \ref{thm:lev2} from Section \ref{sec:lev2} and
Observation \ref{obs:apxhard} from Section \ref{sec:complexity}.\\

\begin{figure}
\centering \vspace{-0.5cm} \includegraphics{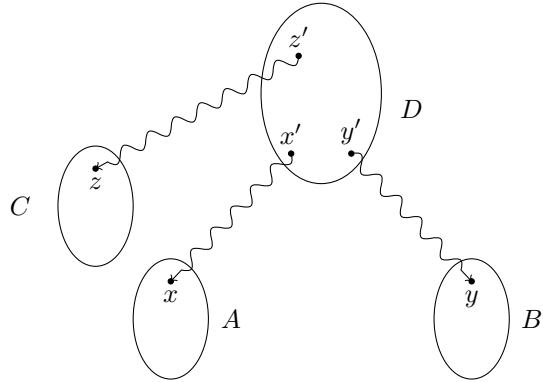} \caption{The case $LCA(A,B)=LCA(A,B,C)$ from the proof of Lemma \ref{lem:fast_consistency}.} \vspace{-0.3cm} 
\label{fig:lca}
\end{figure}

\noindent
\textbf{Lemma \ref{lem:fast_consistency}.}
\emph{Given a level-$k$ network $\mathcal{N}$, we can preprocess it in time $O(m + mk^2)$ so that given any three vertices $x,y,z$ we can check 
whether $xy|z$ is consistent with $\mathcal{N}$ in time $O(1)$.}

\begin{proof}
We begin with a definition and two claims. A \emph{chain} is a sequence of vertices $v_0\rightarrow v_1\rightarrow\ldots v_{k+1}$ such that both in and out degrees 
of $v_1,\ldots,v_k$ are all equal 1. Now, consider an arbitrary biconnected component in $\mathcal{N}$. We claim that, if each chain within the component
is replaced by an edge, the transformed component contains at most $\max(2,3k)$ vertices. This is trivially true for the biconnected component consisting
of only a single edge. To see that it is more generally true, note that the transformed component contains at most $k$ recombination vertices and
that all other vertices have outdegree 2 and indegree at most 1. Each such non-recombination vertex creates at least one new path that eventually
has to terminate in a recombination vertex, and a recombination vertex can terminate at most two paths. So there are at most $2k$ non-recombination
vertices, and the claim follows.

From now on biconnected component (or simply component) refers to a biconnected component containing more than
one edge. We claim that all biconnected components within $\mathcal{N}$ are vertex-disjoint. To see why this is, note that each vertex in such a 
component must have degree at least 2. The value of indegree plus outdegree of all vertices in $\mathcal{N}$ is at most 3, so they cannot be a part of more than
one such component.

The above reasoning shows that we could imagine the network as a rooted tree $\mathcal{T}$ in which each vertex corresponds to some bigger
component that has relatively simple structure: contracting all chains in it gives us a graph of size $O(k)$.

First we focus on the case when $x,y,z$ lie in one biconnected component. The obvious solution is to simply preprocess such queries for each
triple of vertices from one biconnected component. This does not give us the desired complexity yet: while each component is small after contracting chains, it 
might originally contain as many as $\Omega(m)$ vertices. We may overcome this difficulty by observing that if some chain consist of more than $3$ inner vertices, 
we can replace it by a chain containing exactly $3$ of them. Then we preprocess the resulting graph using the $O(|V|^3)$ algorithm
from Lemma \ref{lem:qubic_consistency} (observe that $|V|=O(k)$). Given a query concerning 
consistency of some $xy|z$ we may have to replace some of $x,y,z$ with other vertices lying on the shortened chains, which can be easily done in time $O(1)$. The 
whole preprocessing takes time $\sum_i O(k_i^3)$ where all $k_i=O(k)$ and $\sum_i k_i=O(m)$, giving us the desired complexity.

Now we can solve the general case. Let $A,B,C$ be biconnected components such that $x\in A$, $y\in B$ and $z\in C$. We can preprocess $\mathcal{T}$
in linear time so that given its two vertices we can find their lowest common ancestor (\emph{LCA}) in time $O(1)$ \cite{bender1}. (This
is already sufficient for the case $k=0$ and contributes the $m$ term in the running time.) If $xy|z$ is 
consistent there are 
only two possible situations:
\begin{enumerate}
\item $LCA(A,B,C)$ is a proper ancestor of $LCA(A,B)$. This is easy to detect.
\item $LCA(A,B)=LCA(A,B,C)$. Let $D=LCA(A,B)$. We can find $x',y',z'$ - the entrance points
within $D$ (see Figure \ref{fig:lca}) - in time $O(1)$. It can be done by either using level ancestor queries or 
modifying the \emph{LCA}
algorithm \cite{bender2}. We can then check whether $x'y'|z'$ is consistent using the above preprocessing.
\end{enumerate}
\qed
\end{proof}

\begin{figure}
\centering \vspace{-0.5cm} \includegraphics[scale=0.5]{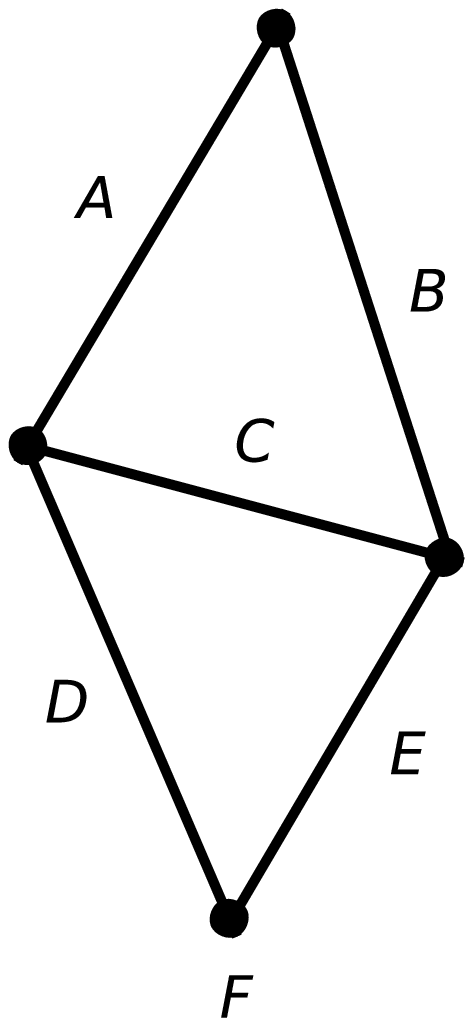} 
\hspace{3cm} \includegraphics[scale=0.9]{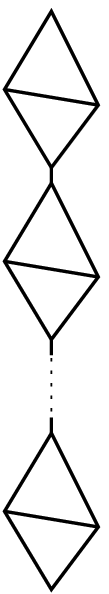}
\vspace{+0.1cm} 
\caption{We construct the network $LB_2(n)$ by repeatedly chaining the structure on the
left, a \emph{simple level-2 network} (see \cite{iersel}), together to obtain an overall topology resembling the structure on the 
right.}
\vspace{-0.1cm} \label{fig:8a}
\end{figure}

\noindent
\textbf{Theorem \ref{thm:lev2}.} \emph{Let $T$ be a set of input triplets labelled by $n$ species.
It is possible to find in polynomial time a level-2 network $\mathcal{N}(T)$ such
that $\mathcal{N}(T)$ is consistent with at least a fraction 0.61 of $T$.}\\[-6pt]
\begin{proof}
We prove the theorem by showing how to construct a topology, which we call $LB_2(n)$, consistent
with at least 0.61 of the triplets in $T_1(n)$. Using Theorem \ref{thm:genericderand}
$LB_2(n)$ can then be labelled to obtain the network $\mathcal{N}(T)$. We show by induction
how $LB_2(n)$ can be constructed. We take $n < 16813$ as the induction base; for these values of $n$ we refer
to a simple computational proof written in Java \cite{myurl}. We now prove the result for $n \geq 16813$.
Let us assume by induction that, for any $n' < n$, there exists some topology 
$LB_2(n')$ such that $\#LB_2(n') \geq 0.61$. If we let $t(n')$ equal the
number of triplets in $T_1(n')$ consistent with $LB_2(n')$, we have that $t(n')/3\binom{n'}{3} \geq
0.61$ and thus that $t(n') \geq 1.83 \binom{n'}{3}$. Consider the structure in Figure \ref{fig:8a}. For $S \in \{A,B,C,D,E\}$,
we define the operation \emph{hanging $l$ leaves from side $S$} as replacing the edge $S$ with a directed
path containing $l$ internal vertices, and then attaching a leaf to each internal vertex.
We construct $LB_2(n)$ as follows. We create a copy of the structure from the figure and
hang $c = \lfloor 0.385n \rfloor$ leaves from side $C$, $d=\lfloor 0.07n \rfloor$ from
side $D$ and $e = \lfloor 0.26n \rfloor$ from side $E$. We let $f = \lfloor 0.285n \rfloor$ and
add the edge $(F,r)$, where $r$ is the root of the network $LB_2(f)$. Finally we
hang $a = n-(c+d+e+f)$ leaves from side $A$; it might be that $a=0$. (The only reason we hang
leaves from side $A$ is to compensate for the possibility that $c+d+e+f$ does not
exactly equal $n$.) This completes the construction of $LB_2(n)$; note that as in Section \ref{sec:lev1}
the network is constructed by recursively chaining the same basic structure together.

We can use Mathematica
to show that $LB_2(n)$ is consistent with at least $0.61$ of the triplets in $T_1(n)$.
In particular, by explicitly counting the triplets consistent with $LB_2(n)$ we derive an 
inequality expressed in terms of $n, c, d, e, f, t(f)$, which Mathematica then simplifies to a cubic 
inequality in $n$ that holds for all $n \geq 16813$. (To simplify
the inequality we take $x-1$ as a lower bound on $\lfloor x \rfloor$ and assume that no leaves are
hung from side A). The Mathematica script is reproduced in Figure \ref{fig:math}, and can be
downloaded from \cite{myurl}. Finally, we comment that the networks computationally constructed
for $n < 16813$ are, essentially, built
in the same way as the networks described above. The only difference is that, to absorb inaccuracies
arising from the floor function, we try several possibilities for how many leaves should be hung
from each side; for side $C$, for example, we try also $(c-1)$ and $(c+1)$ leaves. \qed
\end{proof}

\begin{figure}
\centering \vspace{-0.5cm} \includegraphics{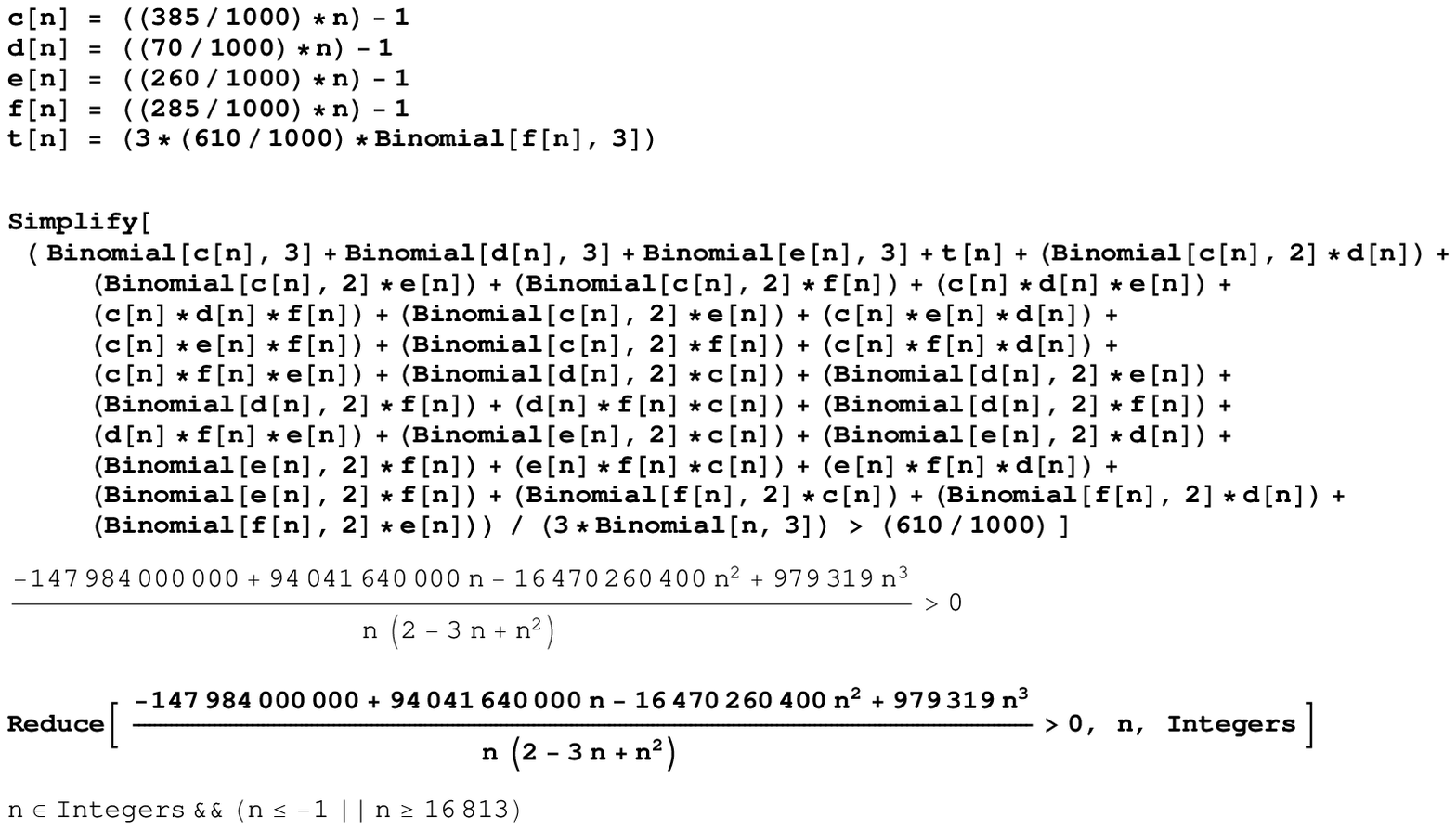}
\caption{The Mathematica script used in the proof of Theorem \ref{thm:lev2}.}
\vspace{-0.3cm} \label{fig:math}
\end{figure}

\noindent
\textbf{Observation \ref{obs:apxhard}. } \emph{MAX-LEVEL-0 and MAX-LEVEL-0-LABELLING are both APX-complete.}\\[-6pt]
\begin{proof}
By Theorem 1 we may label any tree topology to make
it consistent with 1/3 of the given triplets. Therefore,
both problems are in the class APX.
To prove APX-hardness we use a reduction proposed by Wu~\cite{wu04}
and we show that it is actually an \emph{L-reduction}
from the MAXIMUM SUBDAG problem. Both L-reductions and
the MAXIMUM SUBDAG problem were studied by Papadimitiou 
and Yannakakis~\cite{papa}. MAXIMUM SUBDAG, which is equivalent
to the problem LINEAR ORDERING, has been proven
APX-complete \cite{fences}\cite{papa}.

In the MAXIMUM SUBDAG problem we are given a directed graph 
$G = (V,A)$, and the goal is to find a maximal cardinality
subset of arcs $A' \subset A$ such that $G' = (V,A')$ is acyclic.

In the reduction of Wu one constructs an instance of the 
MAX-LEVEL-0 problem as follows. Given a directed graph $G = (V,A)$,
let $x \notin V$, consider the set of triplets $T$
containing a single triplet $t_{uv} = ux|v$ for every arc $(u,v) \in A$,
where $X = X(T) = V \cup \{ x\}$.
To argue that it is an L-reduction it remains to
prove the following two claims.

1) If there exists a subset of arcs $A' \subset A$ such that $G' = (V,A')$ is acyclic
and $|A'| = k$, then there exists a phylogenetic tree consistent with at least 
$k$ triplets from $T$. To prove this claim, we consider a topological 
sorting of vertices in graph $G'$. We construct the phylogenetic tree 
to be a caterpillar with the leaves labeled (bottom-up) by such sorted vertices,
the lowest leaf is labelled by $x$.
It remains to observe that for any arc $(u,v) \in A'$ the corresponding
triplet $t_{uv}$ is consistent with the obtained phylogenetic tree.   

2) Given a phylogenetic tree $\mathcal{B}$ consistent with $l$ triplets from $T$,
we may construct in polynomial time a subset of arcs $A' \subset A$ 
such that $G' = (V,A')$ is acyclic and $|A'| = l$. In fact we will show
that it suffices to take $A'$ consisting of the arcs $(u,v)$ such that
the corresponding triplet $t_{uv}$ is consistent with $\mathcal{B}$.
We only need to argue that for such a choice of $A'$ the resulting
graph $G'=(V,A')$ is acyclic. Consider the path in the tree $\mathcal{B}$ from the root 
node to the leaf labeled by the special species $x$.
For any vertex $v \in A$, the species $v$ has an internal node 
on this path where he branched out of the evolution of $x$, namely
the lowest common ancestor of $u$ and $x$ ($LCA(u,x)$).
Observe, that the position of $LCA(u,x)$ induces a partial ordering $>_{\mathcal{B}}$ on $A$.
Recall, that if a triplet $t_{uv} = ux|v$ is consistent with $\mathcal{B}$, then
$LCA(u,x)$ is a proper ancestor of $LCA(v,x)$. Therefore, the consistent 
triplets from $T$ induce another partial ordering that may be 
extended to $>_{\mathcal{B}}$. This implies that for $A'$ 
containing the arcs $(u,v)$ such that
a triplet $t_{uv}$ is consistent with $\mathcal{B}$
the graph $G' = (V,A')$ is acyclic.

With the above construction we have shown that the existence
of an $\epsilon$-approximation algorithm for the MAX-LEVEL-0
problem implies existence of an $\epsilon$-approximation algorithm
for the MAXIMUM SUBDAG problem. In particular, existence of
a Polynomial-Time Approximation Scheme (PTAS) for MAX-LEVEL-0
would imply existence of PTAS for MAXIMUM SUBDAG, which is unlikely
due to the results in \cite{fences} and \cite{papa}.

In the reduction we might have assumed a particular topology for the tree.
Namely, we might have assumed, that the topology needs to be
a caterpillar. Therefore, the problem MAX-LEVEL-0-LABELLING is also APX-hard. 
\qed
\end{proof}


\end{document}